\def\BibTeX{{\rm B\kern-.05em{\sc i\kern-.025em b}\kern-.08em
    T\kern-.1667em\lower.7ex\hbox{E}\kern-.125emX}}
\definecolor{darkblue}{rgb}{0, 0, 0.5}
\theoremstyle{plain}
\newtheorem{thm}{Theorem}[section]
\theoremstyle{definition}
\begin{document}

\title{CoVer: Collaborative Light-Node-Only Verification and Data Availability for Blockchains
}

\author{\IEEEauthorblockN{Steven Cao, Swanand Kadhe, Kannan Ramchandran}
\IEEEauthorblockA{\textit{Department of Electrical Engineering and Computer Sciences} \\
\textit{University of California, Berkeley}\\
\{stevencao, swanand.kadhe, kannanr\}@berkeley.edu}}

\maketitle

\begin{abstract}
Validating a blockchain incurs heavy computation, communication, and storage costs. As a result, clients with limited resources, called \textit{light nodes}, cannot verify transactions independently and must trust \textit{full nodes}, making them vulnerable to security attacks. Motivated by this problem, we ask a fundamental question: can light nodes securely validate without any full nodes? We answer affirmatively by proposing CoVer, a decentralized protocol that allows a group of light nodes to collaboratively verify blocks even under a dishonest majority, achieving the same level of security for block validation as full nodes while only requiring a fraction of the work. In particular, work per node scales down proportionally with the number of participants (up to a log factor), resulting in computation, communication, and storage requirements that are \textit{sublinear in block size}. Our main contributions are light-node-only protocols for fraud proofs and data availability.
\end{abstract}

\section{Introduction}

\subsection{Motivation}
\label{sec:motivation}

Blockchain participants that independently validate and store all the blocks are called \textit{full nodes} and are vital to the security of the network. However, as adoption of cryptocurrencies grows, the burden of running a full node will become infeasible for most users due to higher throughput, causing more participants to operate as \textit{light nodes}. Light nodes or light clients, based on the simplified payment verification (SPV) protocol proposed by Nakamoto~\cite{bitcoin}, store only block headers and do not validate transactions. As a result, they cannot contribute to the security of the network and are vulnerable to several security and privacy attacks (see, e.g.,~\cite[Chapter 6]{Karame:16}). In particular, light nodes accept the longest header chain, assuming that any \textit{invalid} chain, or a chain containing fraudulent transactions, will not be mined upon. Therefore, in the status quo, their security depends on the longest chain being valid, requiring the strong assumption that the majority of miners are honest. 

In \cite{fraud}, Al-Bassam et al.\ propose protocols for \textit{fraud proofs} and \textit{data availability}, allowing light nodes to reject headers of invalid blocks by receiving compact proofs of their invalidity from full nodes. While block proposal protocols still require an honest majority (e.g.\ to prevent forking-based double spending attacks), they solve the verification problem with an honest minority of full nodes. However, the protocols still require each light node to be in communication with an honest full node, increasing the burden on full nodes and limiting the protocol's scalability.

Motivated by these challenges, we ask a fundamental question: {\it can light nodes securely validate without any full nodes?} Formally, we characterize a light node as one that bears computation, communication, and storage costs \emph{sublinear in block size}. We answer affirmatively by proposing a decentralized and permissionless light-node-only {\bf Co}llaborative {\bf Ver}ification protocol, called {\bf CoVer}, which enables a group of light nodes to collaboratively validate blocks even under a dishonest majority, resulting in the same level of security as full nodes.\footnote{Like \cite{fraud} and \cite{codedmerkle}, we restrict our attention to block verification and not block proposal. Forking-based double spending attacks are still possible with a dishonest majority of miners.} Furthermore, the protocol is flexible in that participants can perform as much or as little validation as they wish, with more active participants contributing more to the security of the protocol. The result is a blockchain network that accommodates and utilizes the computing resources of participants of all sizes.


\subsection{Overview}\label{section:overview}
The key ingredients in CoVer are light-node-only protocols for fraud proofs and data availability (see Fig.~\ref{fig:system-model} for an overview). At a high level, each node verifies a small part of each block and broadcasts a \textit{fraud proof}, i.e.\ a proof of block invalidity, for any invalid transaction. A node rejects the block if it receives a valid fraud proof and accepts it otherwise. Then, the protocol is secure as long as each transaction is validated by at least one honest participant. 

To realize this idea, we propose protocols to solve the following three challenges. The first challenge is that light nodes should be able to validate individual transactions without needing to process the whole block or store the entire state.
We overcome this challenge by proposing a \textit{new block structure and fraud proof protocol} (Section~\ref{section:validation}).

The second challenge is to guarantee \textit{data availability}~\cite{fraud}: if a miner includes an invalid transaction in the Merkle root of the block but does not make this transaction available for download, then no node can produce a fraud proof for it. Therefore, light nodes should accept a header only if all of its transactions are available, and they must check availability while only downloading a small amount of data. To solve this problem, the protocol in \cite{fraud} requires miners to encode the block data using an erasure code. Then, to hide any single transaction, a miner must prevent decoding by either (1) hiding at least a constant fraction of the block, in which case a light node can catch unavailability with high probability by randomly sampling a constant number of shares, or (2) constructing the coded data incorrectly, in which case full nodes can produce proofs of coding fraud. The bottleneck in this approach is decoding the block, which is at least linear in block size and therefore infeasible for light nodes. We address this challenge by proposing a \textit{secure collaborative decoding protocol} on top of the recently proposed LDPC-coded Merkle tree~\cite{codedmerkle} (Section~\ref{section:coding}).

The third challenge is that light nodes cannot download the entire block. To address this challenge, we propose and analyze a simple modification of the gossip protocol, deemed \textit{selective broadcast}, in which nodes only gossip the data that they need (Section~\ref{section:broadcast}).

Together, these protocols empower a group of light nodes to collaboratively validate a block with each node performing only a small amount of \textit{work}, i.e.\ computation, communication, and storage. Specifically, if each block has $L$ transactions and there are $N_h$ honest light nodes, each light node performs $\tilde{O}(L/N_h)$ work, where $\tilde{O}$ denotes equivalence up to log factors. Due to the connectivity required for the selective broadcast communication scheme, this savings factor is capped at $\sqrt{L}$, resulting in $\tilde{O}(\sqrt{L})$ work per node for large validation pools.\footnote{This ceiling can be bypassed by alternate communication schemes that may be possible given stronger assumptions.} These protocols are also decentralized and permissionless while requiring only an honest minority. Table~\ref{table:summary} summarizes the properties of CoVer while comparing it to other approaches.

\begin{table*}[t]
	\begin{center}
		\caption{\label{table:summary}Summary of approaches for light node security\vspace{-10pt}}
		\resizebox{\linewidth}{!}{\begin{tabular}{ l|l|l|l|l|l|l|l }
			& \shortstack[l]{Full node \\ computation} &\shortstack[l]{Full node \\ communication} &\shortstack[l]{Light client \\ computation} & \shortstack[l]{Light client \\ communication} & \shortstack[l]{Light client storage} & Block size & \shortstack[l]{Light client \\ security dependence} \\ 
			\hline
			\hline
			Status quo \cite{bitcoin} & $O(L)$ & $O(L)$ & O(1) & O(1) & $O(T)$ & $O(L)$ & \shortstack[l]{Miners \\ (honest majority)} \\ 
			\hline
			\shortstack[l]{Fraud proofs \cite{fraud} + \\ 2D Reed Solomon \cite{fraud}$^\alpha$} & \shortstack[l]{$O(L \log A)$ + \\ $O(L^{1.5})$}& $O(L)$ & \shortstack[l]{$O(\log (AL))$ + \\ $O(\sqrt{L} \log L)$ } & \shortstack[l]{$O(\log (AL))$ + \\ $O(\sqrt{L} \log L)$ } & \shortstack[l]{$O(T)$ + \\ $O(\sqrt{L}T)$} & $O(L)$ & \shortstack[l]{Full nodes \\ (honest minority)} \\ 
			\hline
			\shortstack[l]{Fraud proofs \cite{fraud} + \\ Coded Merkle Tree \cite{codedmerkle}} & \shortstack[l]{\phantom{$O(L^{1.5})$ + } \\ $O(L \log A)$} & $O(L)$ & $O(\log (AL))$ & $O(\log (AL))$ & $O(T)$ & $O(L)$ & \shortstack[l]{Full nodes \\ (honest minority)} \\ 
			\hline
			\hline
			CoVer (this work) & N/A & N/A & $O\left(\frac{1}{k} L\log L\right)$ &  $O(k \log N_h)$ &  $O\left(T + \frac{1}{k} L\log L\right)$ & $O(L \log L)$ & \shortstack[l]{Each other \\ (honest minority)} \\
			\hline
			\multicolumn{8}{l}{}\\
			\multicolumn{8}{l}{$L$: transactions per block, $N_h$: number of honest participants, $A$: total number of accounts, $T$: number of rounds, $k$: division of work parameter, capped at $\frac{\log N_h}{N_h}$} \\
			\multicolumn{8}{l}{We compare to the original fraud proofs with two versions of data availability, 2D Reed Solomon \cite{fraud} and the coded Merkle tree \cite{codedmerkle}.}\\
			\multicolumn{8}{l}{$^\alpha$For fraud proofs +  2D Reed Solomon, we split the costs into those due to validation of transactions (upper row) and data availability (lower row).}
		\end{tabular}}
	\end{center}
	\vspace{-10pt}
\end{table*}

\subsection{Problem Setup and Objectives}
\noindent\textbf{System Model:}
We consider two types of nodes:
\begin{enumerate}
    \item \textit{Miners}: nodes that produce and broadcast new blocks.
    \item \textit{Validators}: nodes that validate new blocks. We also refer to validators as \textit{participants}.
\end{enumerate}
We assume that all participants are \textit{light nodes}. Specifically, we model a light node as a node having computation, communication, and storage capabilities sublinear in the size of each block. We note, however, that nodes with higher capacity can also participate in the protocol. For example, a node with twice the capability of a normal light node can act as two light nodes and perform twice the work.

We assume bounded network delay such that if a node sends a message, it can be received by all other nodes connected to the sender within some maximum delay $\Delta$. For analysis, we model the connectivity between honest nodes as an Erd\H{o}s-R{\'e}nyi random graph, where each pair of nodes is connected independently with some probability $p$ \cite{graph}. This connection probability is a parameter of the protocol, and we analyze the required connectivity in Section~\ref{section:connectivity}. We make no assumptions on the connectivity of dishonest nodes.

\begin{figure}[!t]
    \centering
    \includegraphics[width=0.75\linewidth]{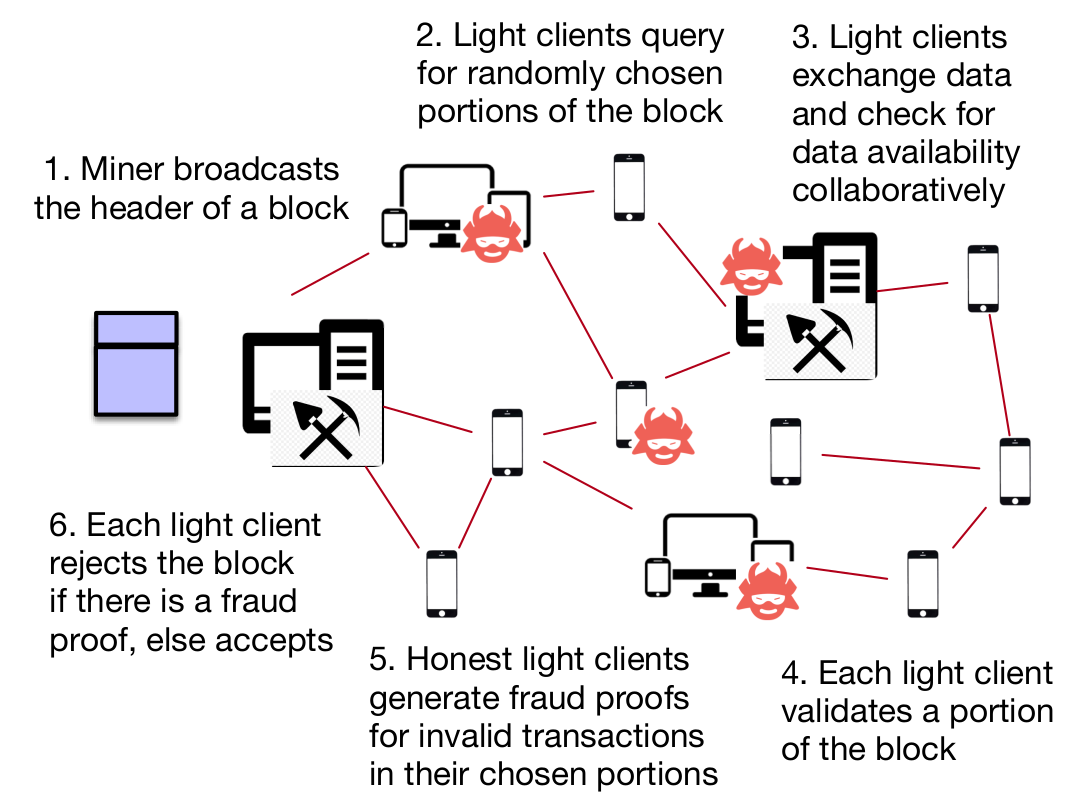}
    \caption{An overview of CoVer. We consider a network with miners, i.e.\ block producers, and light node validators, which perform computation, communication and storage sublinear in block size. The key ingredients are light-node-only protocols for data availability and fraud proofs. }
    \label{fig:system-model}
    \vspace{-10pt}
\end{figure}

\vspace{4pt}
\noindent\textbf{Threat Model:} We assume that any nodes can be dishonest and act adversarially. Dishonest nodes may deviate from the protocol in arbitrary manner and may collude with each other. Further, we assume that dishonest nodes can perform any polynomial-time computations but cannot invert hash functions. We require an honest minority, meaning that for the protocol to function, there must be honest participants doing enough work such that every transaction in a block is validated by at least one participant. The required minimum number is analyzed in Section~\ref{section:coverage}. Like \cite{fraud}, we only solve verification. Forking-based double spending attacks are still possible with a dishonest majority of miners.

\vspace{4pt}
\noindent\textbf{Objectives:} In every round, a miner produces a block and broadcasts the block along with its header. A malicious miner may only broadcast subset of its data. 
Then, each participant executes the protocol and chooses to either accept or reject the block. The protocol should be
\begin{enumerate}
    \item \textit{Correct}: all participants accept the block if and only if
    \begin{enumerate}
        \item The block is \textit{valid} (contains only valid transactions).
        \item The block is \textit{available}, meaning that the miner broadcasted enough of the block data such that all of the data can be decoded (see Section~\ref{section:coding}).
    \end{enumerate}
    \item \textit{Light-node-only}: each participant performs computation, communication, and storage sublinear in the number of transactions in the incoming block.
    \item \textit{Permissionless}: nodes can freely join and leave.
    \item \textit{Decentralized}: the requirements of the protocol should be uniform over all participants.
    \item \textit{Secure under honest minority}: for block verification, the number of honest participants required for the protocol to be correct should not depend on the number of total participants.
\end{enumerate}
We require correctness only with high probability, rather than in the worst case. We will use $\lambda$ to denote a security parameter such that correctness holds with probability at least $(1- e^{-\lambda})$. The specific high probability expressions are calculated in the correctness proofs (Section~\ref{section:correctnessproof}).

\section{Collaborative Verification}\label{section:validation}
\subsection{Proposed Block Structure}
To enable fraud proofs, Al-Bassam et al.~\cite{fraud} modify the block to include the root of a sparse Merkle tree that contains the state (UTXOs or account balances). However, this approach does not accommodate light clients because the state cannot be divided. Therefore, we propose an alternate block structure and fraud proof protocol. First, as usual, the header of the $i$-th block, denoted $\texttt{header}(i)$, contains:
\begin{itemize}
	\item $\texttt{prevHash}$: hash of the previous block.
	\item $\texttt{root}$: root of the Merkle tree containing transactions. 
	\item $\texttt{len}$: number of transactions.
	\item $\texttt{other}$: other data (like the nonce for proof-of-work).
\end{itemize}
We will denote the Merkle proof of the transaction $\texttt{txn}$ in $\text{block $i$}$ as $\texttt{proof}(\texttt{txn} \to \texttt{header}(i).\texttt{root})$. Next, as usual, each transaction $\texttt{txn}$ in block $i$ contains:
\begin{itemize}
    \item \texttt{txid}: transaction id (e.g., 32 byte transaction hash) 
	\item \texttt{sender}: the id (e.g.,\ public key) of the sender.
	\item \texttt{signature}: the sender's signature.
	\item \texttt{outputs}: a list of transaction outputs, numbered $1,2,...$, which specify the recipient and the amount.
\end{itemize}
To enable fraud proofs (see Section~\ref{section:coverfraud}), we require that each transaction reference and provide Merkle proofs for the UTXOs (unspent transaction outputs) that fund it:
\begin{itemize}
	\item \texttt{inputs}: a list of past TXOs $(\texttt{txid}, j)$, i.e.\ the $j$th output of $\texttt{txn}_\texttt{txid}$, where the sender received money. 
	\item \texttt{inputProofs}: for each $(\texttt{txid}, j)$ in $\texttt{inputs}$, a Merkle proof
	linking each input to the \texttt{root} in some past $\texttt{header}(k)$ in the chain, where $k < i$. 
\end{itemize}
Given this block structure, a transaction is valid if
\begin{enumerate}
	\item The signature is valid.
	\item The sum of inputs equals the sum of outputs.\footnote{If the inputs are greater, the sender can send herself the change. For simplicity, we do not consider transaction fees.}
	\item The input Merkle proofs are valid.
	\item The corresponding TXOs are unspent. 
\end{enumerate}
Any node can check (1)-(3) with just the header chain, but checking (4) requires some knowledge of the current \textit{state}, i.e.\ some representation of the transaction history. 
For our protocol, validators store the state as a hash table \texttt{spentTXOs} mapping the IDs of spent TXOs to the transaction that spent it, along with a Merkle proof of that transaction to enable fraud proofs (see Section~\ref{section:coverfraud}).

Assuming that each transaction has a constant number of outputs, after $T$ blocks with $L$ transactions per block, the size of this table will be $O(TL\log L)$. We will reduce the size of this table by removing the dependence on $T$ in Section~\ref{section:lessstorage}, and we will further reduce to $O((L/k)\log L)$ by dividing the work among the participants in Section~\ref{section:collabvalprotocol}.

Then, a node can validate \texttt{txn} in block $i$ as follows:
\begin{equation*}
    \texttt{is\_valid\_txn}(\texttt{txn}, \texttt{spentTXOs}) \in \{\texttt{True}, \texttt{False}\}
\end{equation*}
\begin{enumerate}[label=\texttt{\arabic*:}]
	\item Check if the \texttt{signature} is valid.
	\item Check that sum of \texttt{inputs} equals sum of \texttt{outputs}.
	\item Check that the \texttt{inputProofs} are valid.
	\item Check for double payment: check that each TXO in \texttt{inputs} is not in the table \texttt{spentTXOs}.
	\item If any check fails, return \texttt{False} and broadcast a \texttt{fraudProof}, as described in the following section.
	\item If all checks succeed, return \texttt{True}.
\end{enumerate}
If the block is valid, then for each \texttt{txn}, a node can update its state storage by including the spent TXOs as follows:
\begin{equation*}
    \texttt{update\_state}(\texttt{txn}, \texttt{spentTXOs})
\end{equation*}
\begin{enumerate}[label=\texttt{\arabic*:}]
    \item For each $(\texttt{txid}, j)$ in \texttt{inputs}, insert the entry \\
$(\texttt{txid}, j)$ : $(\texttt{txn}, \texttt{proof}(\texttt{txn} \shortrightarrow \texttt{header}(i).\texttt{root}))$ \\
into the hash table \texttt{spentTXOs}.
\end{enumerate}

\subsection{CoVer Fraud Proofs}\label{section:coverfraud}
If a transaction in the block is invalid, then a validator can produce a fraud proof, which proves that the block contains an invalid transaction and can be checked by any node storing the header chain. A \texttt{fraudProof} contains:
\begin{itemize}
	\item \texttt{invalidTxn}: the invalid transaction.
	\item \texttt{invalidTxnProof}: the Merkle proof to header $i$.
	\item \texttt{pastTxn}: if the transaction is invalid due to double spending, the proof contains a past transaction from the same \texttt{spender} with at least one colliding input.
	\item \texttt{pastTxnProof}: the Merkle proof to some past header $k$, where $k < i$.
\end{itemize}
Then, any recipient can check the fraud proof as follows:
\begin{align*}
    \texttt{is\_valid\_fraudProof}(\texttt{fraudProof}) \in \{\texttt{T}, \texttt{F}\}
\end{align*}
\begin{enumerate}[label=\texttt{\arabic*:}]
    \item Check \texttt{invalidTxnProof} and \texttt{pastTxnProof}. If either proof is invalid, return \texttt{False}.
    \item Perform the first three checks of \texttt{is\_valid\_txn}(\texttt{invalidTxn}, $\bullet$). If any check fails, then the transaction is invalid, so return \texttt{True}.
    \item Check for double spending: if the intersection \texttt{invalidTxn.inputs} $\cap$ \texttt{pastTxn.inputs} is non-empty, then return \texttt{True}.
    \item Otherwise, return \texttt{False}.
\end{enumerate}

Assuming that the number of inputs and outputs is constant and hash table lookup is constant time, the heaviest computation is checking a constant number of Merkle proofs, so the runtimes of \texttt{is\_valid\_fraudProof} and \texttt{is\_valid\_txn} are both $O(\log L)$.

\begin{figure}[!t]
    \centering
    \includegraphics[width=\linewidth]{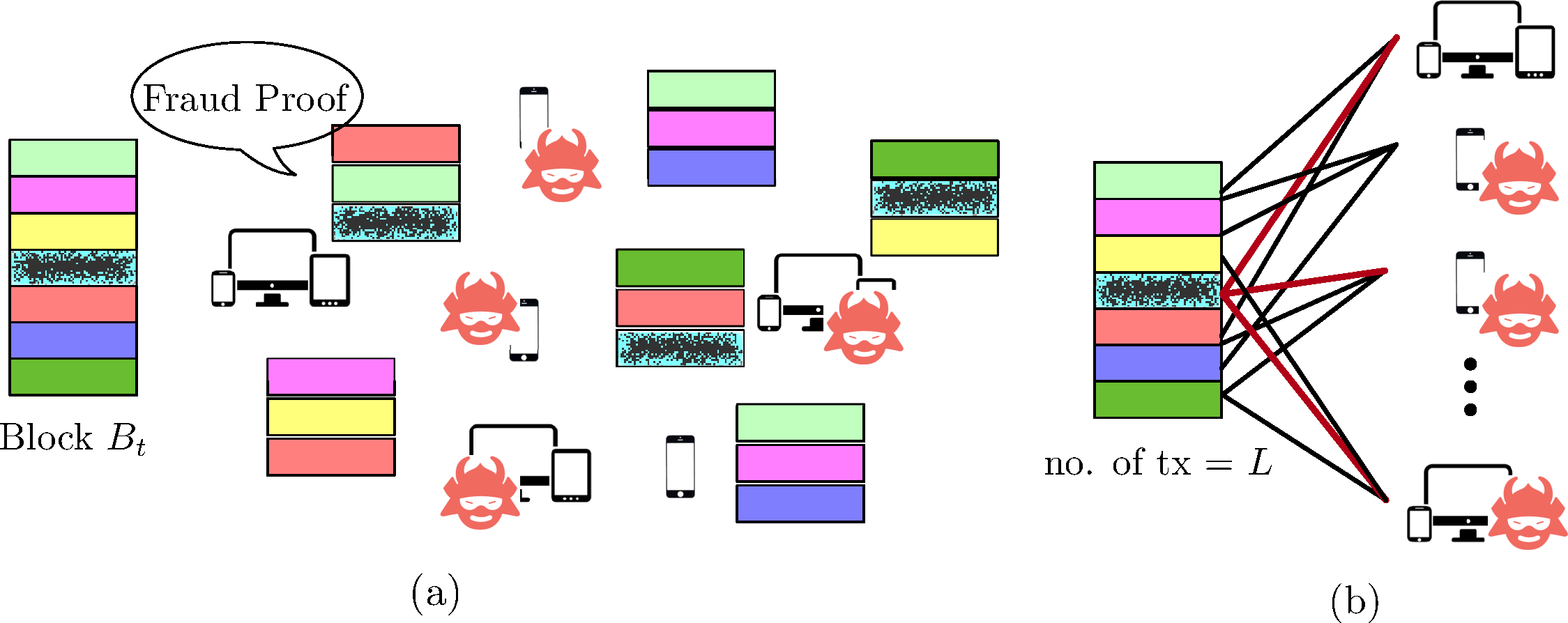}
    \caption{Collaborative verification using fraud proofs. Each light node will verify roughly $1/k$ fraction of each block on average. Honest light nodes  generate fraud proofs for any invalid transactions that they verify. Every invalid transaction will be caught with a fraud proof as long as each of the $k$ sections is \textit{covered} (i.e., validated) by at least one honest participant.}
    \label{fig:collaborative-verification}
    \vspace{-10pt}
\end{figure}

\subsection{Collaborative Validation}\label{section:collabvalprotocol}
Note that a node can verify a transaction as long as it has the history of its sender, so a node can choose a segment $[a,b]$ of account numbers and only validate transactions sent by those accounts.\footnote{To facilitate partial validation, the transactions in a block should be sorted by the sender's account number. A ``sorting fraud proof'' would contain two transactions that are out of order, plus their Merkle proofs.} We will divide the account numbers into $k$ equal-sized sections and each node will choose one section at random, resulting in the following protocol (see Fig.~\ref{fig:collaborative-verification}):
\begin{enumerate}
	\item Download the \texttt{header}.
	\item Download the transactions \texttt{txn} where \texttt{txn.sender} is in the node's chosen account section. The selective downloading scheme is described in Section~\ref{section:broadcast}.
	\item Validate the transactions and broadcast a \texttt{fraudProof} if any is invalid.
	\item Wait some fixed delay. Reject the block if you receive a valid \texttt{fraudProof}, otherwise accept it.
	\item If the block is accepted, update \texttt{spentTXOs} to include the newly spent \texttt{inputs} for each \texttt{txn}.
\end{enumerate}
Note that malicious nodes cannot generate fraud proofs for valid transactions, and a single valid fraud proof is sufficient to discard the block. Then, the scheme is secure as long as each of the $k$ sections is covered by at least one honest participant. We analyze the probability of coverage in Section~\ref{section:coverage} and show that we need $N_h > k(\log k + \lambda)$ honest nodes, each validating $1/k$th of the block, to cover the block with probability at least $1-e^{-\lambda}$. Therefore, the fraction $1/k$ of work required scales down proportionally with the number $N_h$ of honest nodes, up to a log factor. Honest nodes with more resources can verify more than one section, counting as multiple nodes in the calculation above. 

\subsection{Reducing State Storage and Switching Sections}\label{section:lessstorage}
One shortcoming of the above approach is that the amount of storage grows linearly with the length of the blockchain. Specifically, each node stores $O(L/k)$ transactions per block on average, each of which is size $O(\log L)$, resulting in $O(T \frac{L}{k} \log L)$ storage after $T$ rounds. To reduce this storage, we add an \textit{expiration time} inspired by~\cite{vault}, where each transaction can only be spent up to $\tau$ blocks after it is mined (note that the TXO owner can pay herself to avoid expiration). Then, storage is reduced to $O(\tau \frac{L}{k} \log L)$.

While each validator can choose to stick to one account section forever, it may be desirable from a security perspective for validators to periodically switch sections, e.g. in the case of slowly adaptive adversaries. Because TXOs are only valid up to $\tau$ blocks after they are mined, two transactions whose inputs intersect must be within $\tau-1$ blocks of each other. Then, a validator can join a new section as follows:
\begin{enumerate}
	\item For time steps $t, t+1, ..., t + (\tau -1)$, download the transactions for the new section.
	\item At time step $t + \tau$, the validator now has enough history to validate the new section.
\end{enumerate}

\section{Collaborative Data Availability Coding}\label{section:coding}

\subsection{Coding Structure}\label{section:codeoverview}

As described in Section~\ref{section:overview}, a fraud proof cannot be produced for a hidden transaction, so light nodes must be assured that the entire block is available \textit{without downloading the entire block}. To solve this problem, known as \textit{data availability}, the high-level idea in \cite{fraud} is to require miners to encode their block with an erasure code. This idea is improved in \cite{codedmerkle} via a coding scheme that combines Low-Density Parity-Check (LDPC) codes with Merkle trees. 

First, we review the LDPC erasure code~\cite{ldpc}~\cite{Richardson:08}. Starting with $n$ data symbols, a rate-$1/2$ LDPC code produces $n$ additional coded symbols, resulting in $2n$ symbols total. The code is specified by a set of parity equations, which signify that some subset of the symbols sum to zero, and can be represented by a bipartite graph with $n$ left vertices (symbols) and $n-k$ right nodes (parity equations); see Fig.~\ref{fig:ldpc}. The code has the following properties:
\begin{enumerate}
	\item Within the subclass of left- and right-regular LDPC codes, each parity equation is connected to a constant number $d_R$ of symbols, and each symbol is connected to a constant number $d_L$ of parity equations.
	\item The code can decoded via {\it peeling}, which has time complexity linear in the number of symbols:
	\begin{enumerate}
	    \item Starting with some set of known symbols, our goal is to recover the unknown symbols.
		\item Find a \textit{singleton}, or a parity equation containing only one unknown symbol. Compute the value of this symbol using the parity equation, and \textit{peel} this symbol by subtracting and removing it from each parity equation it participates in.
		\item Repeat until all the unknown symbols are  decoded (success), or there are no more singletons (failure).
	\end{enumerate}
	\item A \textit{stopping set} denotes a set of symbols that, if removed, prevent peeling decoding from succeeding. It is possible to construct an LDPC code such that the size of the smallest stopping set is a constant fraction $f$ of the total number of symbols.
\end{enumerate}

Next, we review the LDPC-coded Merkle tree~\cite{codedmerkle} (see Fig.~\ref{fig:tree}). The lowest level contains the $L$ transactions, which are coded into $2L$ symbols via a rate-$1/2$ LDPC code. Then, we take the hash of each symbol and group the hashes into groups of $4$ to produce the $L/2$ data symbols in the next layer. We then code these $L/2$ data symbols into $L$ symbols using an LDPC code, and so on. As in a typical Merkle tree, the Merkle proof for a symbol consists of its path to the root. A key property of this structure is that after decoding the $\ell$-th layer, the decoder has the hashes for the $(\ell+1)$-th layer, which will be exploited in the protocols below.

\begin{figure}[t]
	\centerline{\includegraphics[width=\linewidth]{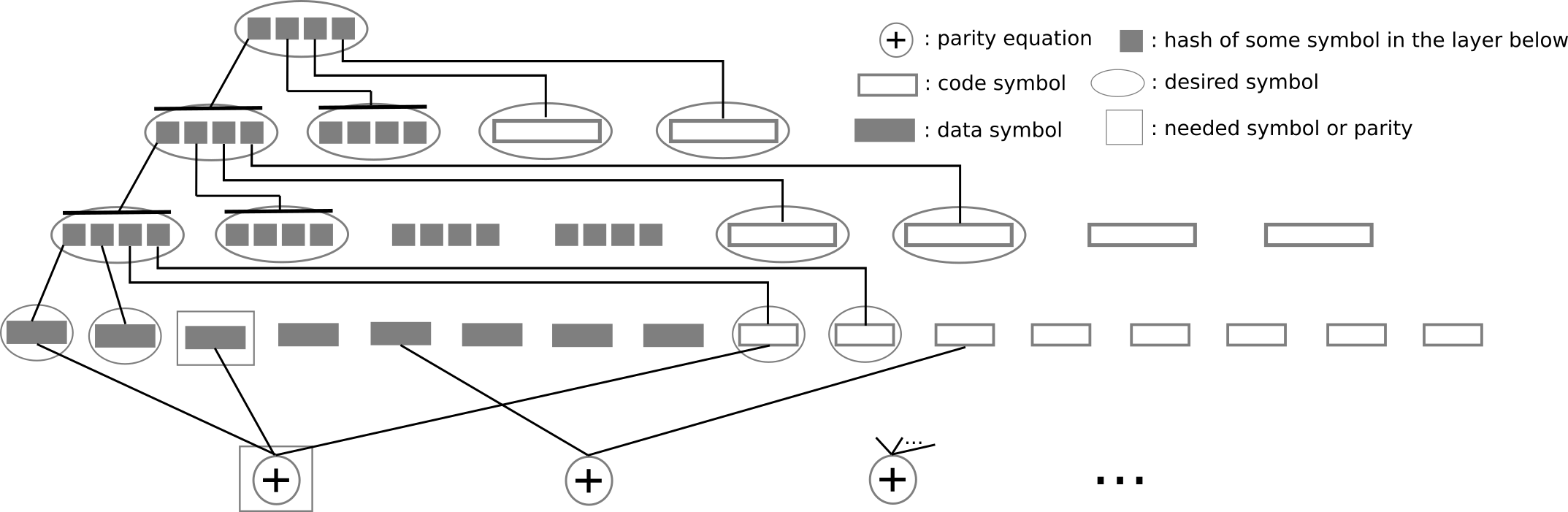}}
	\caption{The structure of a coded Merkle tree~\cite{codedmerkle}. Each row is encoded using a rate-$1/2$ code. Then, symbols are divided into groups of $4$, and their hashes are taken to produce the next level. The arrows denote a sub-tree, starting from the $4$ symbols sampled on the bottom layer. One possible sampled subtree is denoted by the circled symbols, and some example parities and symbols needed to decode this sampled subtree are boxed.}
	\label{fig:tree}
	\vspace{-5pt}
\end{figure}

\subsection{Checking Availability}\label{section:avail}
To make any symbol in the layer unavailable, the miner must prevent peeling decoding from succeeding by hiding a number of symbols that is at least the size of a stopping set, which by the LDPC code is at least a constant fraction $f$ of the data. Leveraging this idea, a light node can check for availability by randomly sampling $c$ symbols from each layer. If the miner has made a layer unavailable, then the node will sample an unavailable symbol with probability at least $1 - (1-f)^c$, which decays quickly in $c$. Therefore, each node samples $c$ symbols for each of the $\log L$ layers, along with their proofs to the root, and it accepts the data as available if and only if it receives all of its sampled data, resulting in $O(c (\log L)^2)$ total data downloaded. 

Using an interleaving approach in the tree structure, the amount of downloaded data can be reduced to $O(c \log L)$~\cite{codedmerkle}. Specifically, a node first chooses $c$ symbols on the bottom layer. Then, it also downloads their paths to the root, as well as any siblings of symbols on these paths. The arrows in Figure~\ref{fig:tree} denote one such sampled subtree. This subtree contains $c$ randomly chosen symbols from each layer, achieving the same sampling guarantees as above. Furthermore, because the sampled symbols form a subtree, each symbol's path to the root and therefore its Merkle proof is contained in the sampled data.

\begin{figure}[t]
	\centerline{\includegraphics[width=\linewidth]{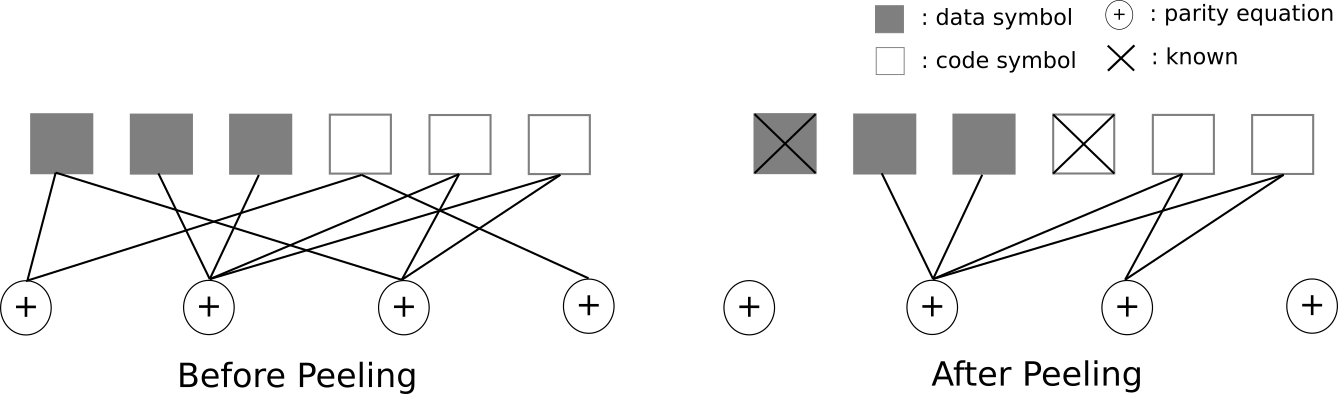}}
	\caption{A graph depiction of an LDPC code with $3$ data symbols and $3$ coded symbols, represented by gray and white boxes. Each $+$ circle denotes a parity equation, which signifies that the attached symbols sum to some known value, initially zero before peeling. After recursively peeling singletons (right), the first and fourth symbols are known.}
	\label{fig:ldpc}
	\vspace{-10pt}
\end{figure}

\subsection{Classical Decoding}
For the protocol above to function, however, the validators must have the capability to decode the block, otherwise the miner can hide small parts of the block without the fear that it will be decoded. We first review non-collaborative decoding, as described in \cite{codedmerkle}. First, the miner revealing some subset of the symbols along with Merkle proofs. Using these revealed symbols, the node attempts to decode the remaining symbols in the tree. We will decode layer by layer via peeling, starting from the top of the tree. For each layer, the decoding protocol should have the following properties:
\begin{enumerate}
    \item If the set of hidden symbols contains a stopping set, then peeling decoding will fail and the node will reject the block as unavailable.
    \item Otherwise, if there is coding fraud (at least one of the parity equations does not sum to zero), then the decoding procedure will produce a coding fraud proof.
    \item Otherwise, decoding is successful.
\end{enumerate}

The layered structure of the coded Merkle tree allows one to catch coding fraud. Suppose that the codes for layers $1,...,\ell-1$ are valid, but there is coding fraud in layer $\ell$. Because the node already decoded layer $\ell - 1$, it knows the hash of each unknown symbol in layer $\ell$. Then, if the node uses an invalid parity equation to decode a new symbol, the hash of this new decoded symbol will not match the known hash in layer $\ell -1$, allowing it to catch coding fraud. Using this idea, a node can decode a layer as follows \cite{codedmerkle}:
\begin{equation*}
    \texttt{decode\_layer}(\texttt{knownSymbols}, \texttt{hashes})
\end{equation*}
\begin{enumerate}[label=\texttt{\arabic*:}]
    \item While there are still unknown symbols,
    \begin{itemize}
        \item Look for a degree-one parity equation, i.e.\ an equation where all but one symbol are known.
        \item The unknown symbol is now known. Check that its hash matches the known hash in \texttt{hashes}. If the hash matches, add the symbol to \texttt{knownSymbols}.
        \item Otherwise, stop and produce a coding fraud proof.
        \item If there are no degree-one parity equations, stop and reject the block as unavailable.
    \end{itemize}
    \item Once everything is decoded, check the parity equations and make sure they sum to zero. If any equation does not, stop and produce a coding fraud proof.
    \item Otherwise, this layer is valid and has been decoded.
\end{enumerate}
Then, a node can decode the entire tree as follows:
\begin{equation*}
    \texttt{decode\_tree}()
\end{equation*}
\begin{enumerate}[label=\texttt{\arabic*:}]
    \item For each layer $\ell$, keep track $\texttt{knownSymbols}(\ell)$ and their hashes $\texttt{hashes}(\ell)$. Download as many symbols as possible, and add each symbol with a valid Merkle proof to $\texttt{knownSymbols}(\ell$).
    \item Initialize $\texttt{hashes}(1)$ to the \texttt{root} of the Merkle tree.
    \item For layer $\ell = 1,2,...,\lceil \log L \rceil$:
    \begin{itemize}
        \item Call $\texttt{decode\_layer}(\texttt{knownSymbols}(\ell),$
        $\texttt{hashes}({\ell}))$, which produces $\texttt{hashes}({\ell}+1)$.
    \end{itemize}
\end{enumerate}

\subsection{Coding Fraud Proofs}\label{section:codingfraudproof}
Suppose that after peeling some degree-one parity equation using symbols $s_1,...s_{d-1}$, the decoded symbol $\hat{s}_d$ has an incorrect hash. Because this parity equation was the first to fail, the $d-1$ previously known symbols and the $d$th hash are all valid, known, and have Merkle proofs to the root. Then, a \texttt{codingFraudProof} contains the following:
\begin{itemize}
    \item $\hat{s}_d$: the newly decoded incorrect symbol.
	\item $h_d$: the committed hash of the $d$-th symbol.
	\item \texttt{hashProof}: the Merkle proof of the hash $h_d$.
	\item $s_1,s_2,...,s_{d-1}$: the $d-1$ known symbols corresponding to the parity-check equation for $\hat{s}_d$.
	\item \texttt{symbolProofs}: Merkle proofs of $s_1,s_2,...,s_{d-1}$.
\end{itemize}
\begin{align*}
    \texttt{is\_valid\_codingFraudProof}( \qquad \qquad \qquad \qquad \\
    \qquad \qquad\texttt{codingFraudProof}) \in \{\texttt{True}, \texttt{False}\}
\end{align*}
\begin{enumerate}[label=\texttt{\arabic*:}]
	\item The \texttt{hashProof} and \texttt{symbolProofs} are correct.
	\item There is a parity equation involving these $d$ symbols.
	\item Decoding this parity equation results in the $\hat{s}_d$ such that $0 = \hat{s}_d + \sum_{i=1}^{d-1} s_i$.
	\item The commitment is invalid, or $\texttt{hash}\left(\hat{s}_d\right) \neq h_d$.
	\item If any of the above checks are false, return \texttt{False}. Otherwise, return \texttt{True}.
\end{enumerate}
The coding fraud proof is size $O(d_R \log L)$ from the $d_R$ Merkle proofs, where $d_R$ (LDPC right degree) is constant.

\subsection{Collaborative Decoding}\label{section:collabdecode}\label{secion:correctnessproof}
To enable this protocol for light nodes, we propose a collaborative decoding scheme. At a high level, decoding follows the steps described above, but each node decodes only a subtree of the entire tree. Every time a node decodes a new symbol, it broadcasts it, allowing other nodes to use it. Each broadcasted symbol must be accompanied by a Merkle proof so that malicious nodes cannot broadcast fake symbols. Decoding succeeds as long as together, the honest nodes cover the entire tree. 

First, we describe the subtree version of $\texttt{decode\_layer}$, assuming the previous layer in the subtree has already been decoded. Decoding layer $\ell$ involves the following sets:
\begin{itemize}
    \item $\texttt{desiredSymbols}(\ell)$: the layer $\ell$ symbols in the subtree. Note that their Merkle proofs are contained in previous layers of the subtree.
    \item $\texttt{knownDesiredSymbols}(\ell)$: the subset of the desired symbols that have been successfully decoded. 
    \item $\texttt{hashes}(\ell)$: the hashes of the desired symbols, which are in $\texttt{knownDesiredSymbols}({\ell-1})$, or the previous layer of the subtree.
    \item $\texttt{neededParities}(\ell)$: any parity equations attached to at least one unknown desired symbol.
    \item $\texttt{neededSymbols}(\ell)$: any symbol attached to a needed parity equation.\footnote{Note that because the left and right degree of the LDPC code are at most $d_L$ and $d_R$, we have $|\texttt{neededSymbols}| \leq d_Rd_L\ |\texttt{desiredSymbols}|$, so the size of this set is not too big.}
    \item $\texttt{knownNeededSymbols}_\ell$: the subset of needed symbols that are known, along with their Merkle proofs.
\end{itemize}
At a high level, each node's job is to securely decode the symbols in $\texttt{desiredSymbols}(\ell)$, and it has stored their hashes after decoding the previous layer. To accomplish this job, the node needs to decode parities, so it downloads the symbols in $\texttt{neededSymbols}(\ell)$, depending on other nodes to decode them. An example is shown in Figure~\ref{fig:tree}. To make the protocol secure, any downloaded symbols must be accompanied by a Merkle proof, and the node also broadcasts any symbol it decodes with an Merkle proof. Specifically, decoding a layer proceeds as follows:
\begin{align*}
    \texttt{collab\_decode\_layer}
    (&\texttt{hashes},\\
    &\texttt{knownDesiredSymbols}, \\
    &\texttt{knownNeededSymbols})
\end{align*}
\begin{enumerate}[label=\texttt{\arabic*:}]
    \item While there are still unknown desired symbols,
    \begin{itemize}
        \item Look for a degree-one needed parity equation (all but one symbol are in $\texttt{knownDesiredSymbols}$ or $\texttt{knownNeededSymbols}$).
        \item The newly decoded symbol is a desired symbol. Check that its hash matches the known hash in \texttt{hashes}. If the hash matches, add the symbol to \texttt{knownDesiredSymbols} and broadcast it with a Merkle proof. Otherwise, stop and produce a \texttt{codingFraudProof} (see Section~\ref{section:codingfraudproof}).
        \item Listen for any broadcasts of \texttt{neededSymbols}. Check their Merkle proofs. If the Merkle proof is valid, add the symbol and its proof to \texttt{knownNeededSymbols}. Store the Merkle proof in case it is needed for a \texttt{codingFraudProof}.
        \item If there are no degree-one parity equations, wait some fixed delay. If there were no additional needed symbols broadcasted in this delay, stop and reject the block as unavailable.
    \end{itemize}
    \item Once everything is decoded, check degree-zero needed parity equations and make sure they sum to zero. If any equation does not, produce a coding fraud proof.
    \item Otherwise, this layer is valid and has been decoded.
\end{enumerate}
We can decode an entire subtree by decoding each layer:
\begin{equation*}
    \texttt{decode\_subtree}()
\end{equation*}
\begin{enumerate}[label=\texttt{\arabic*:}]
    \item Choose a subtree of $O(c\log L)$ symbols, as described in Section~\ref{section:avail} and Figure~\ref{fig:tree}. This subtree defines the $\texttt{desiredSymbols}(\ell)$, $\texttt{neededParities}(\ell)$, and $\texttt{neededSymbols}(\ell)$ for each layer $\ell$.
    \item Attempt to download any desired or needed symbols from the network, along with their Merkle proofs. Add each symbol with a valid proof to $\texttt{knownDesiredSymbols}(\ell)$ or $\texttt{knownNeededSymbols}(\ell)$, where $\ell$ is the layer that the symbol belongs to.
    \item For layer $\ell = 1,2,...,\lceil \log L \rceil$,
    \begin{itemize}
        \item Decode using
        \begin{align*}
            &\texttt{collab\_decode\_layer}(\\
            &\qquad\texttt{knownDesiredSymbols}_\ell, \\
            &\qquad\texttt{knownDesiredSymbols}_{\ell-1}, \\ 
            &\qquad\texttt{knownNeededSymbols}_\ell).
        \end{align*}
        \item If decoding was unsuccessful, stop.
        \item Also listen for any \texttt{codingFraudProof}. If any is valid, stop and reject the block.
    \end{itemize}
\end{enumerate}
Using similar calculations as Section~\ref{section:coverage}, if there are at least
\begin{equation*}
N_h > \frac{L}{c}(\log L + \lambda)
\end{equation*}
honest nodes, then the entire tree is covered with probability at least $(1-e^{-\lambda})$. Note that, as for validation, the fraction $\frac{c}{L}$ of work required scales down proportionally with $N_h$ up to a log factor. For simplicity, we can set $c = L/k$, such that $1/k$-th of the block is sampled for validation and another $1/k$-th is sampled for availability.

\section{Selective Broadcast}\label{section:broadcast}
Finally, because light nodes cannot download the entire block, we adapt the conventional gossip protocol to enable selective downloading. The conventional gossip protocol, which we use for block headers and any fraud proofs, proceeds as follows (see~\cite{Gossip:survey:07} for more details):
\begin{enumerate}
    \item The starting node sends the message to all neighbors.
    \item Each neighboring node does the following: if the node has not already seen the message, it first validates the message (e.g., checks if the fraud proof is valid). If the message is valid, the node forwards the message to all of its neighbors. If the message is invalid or has been seen, the node ignores it.
\end{enumerate}
Note that fake communication from malicious nodes are mitigated by honest nodes only gossiping valid messages.

Other broadcasts, like ones involving symbols in the tree, are only of interest to a subset of the nodes. To reduce communication, we propose and analyze a simple modification of the gossip protocol, called {\it selective broadcast}:
\begin{enumerate}
	\item Each node chooses the symbols it is interested in and informs its neighbors of these symbols.
	\item A node only gossips each received symbol to neighbors that are also interested in that symbol.
\end{enumerate}
Then, the communication complexity per node is simply the total size of the data it wishes to download, plus the initial step of informing neighbors of desired symbols, which has complexity linear in the number of neighbors. The protocol is successful, meaning that every node receives the data it wishes to download, as long as for each symbol, the subgraph of nodes interested in that symbol is connected.

This property always holds if the network graph is fully connected. While nodes would need many neighbors, the only communication that scales with number of neighbors is the initial information step. Because these messages are relatively small, this communication is small compared to the work needed to download, decode, and validate block data, especially if the number of participants is small. Therefore, it is reasonable with a small group of light nodes.

If the number of participants is large, then we can reduce the connectivity. Under the simplifying assumption that the graph is Erd\H{o}s-R{\'e}nyi \cite{graph}, we analyze the required connectivity in Section~\ref{section:connectivity} and find that each node should have at least $O\left( k \log N_h \right)$
neighbors, where $1/k$ is the fraction of each block that each node downloads. In the case where there are malicious nodes, these nodes can choose to ignore all messages, increasing the amount of connectivity needed. In this case, each honest node will need $O\left( \frac{1}{1-\alpha} k \log N_h \right)$ neighbors, where $\alpha$ is the fraction of malicious nodes. This term, which represents the amount of communication during the information step, is linear in $k$, which is exactly the factor of work reduced. Therefore, choosing a savings factor of $k = O(\sqrt{L})$ results in $O(\sqrt{L} (\log L + \log N_h))$ computation, bandwidth, and storage.

\section{Analysis of CoVer}\label{section:connectivity}\label{section:correctnessproof}\label{section:collabdecodeanalysis}\label{section:coverage}
In this section, we formally analyze CoVer; please see the appendix for full proofs.
Let each of the $N_h$ honest nodes choose a section out of $k$ at random to validate, and a random $1/k$th to sample for availability. Let $\lambda$ denote the security parameter such that the protocol succeeds with probability $\geq 1 - e^{-O(\lambda)}$. First, we calculate the number of nodes needed to cover every section with probability $\geq 1-e^{-\lambda}$ and find that we need $N_h \geq k (\log k + \lambda)$. Next, given that the block is unavailable, all honest nodes sample at least one unavailable symbol with probability $\geq 1 - N_h (1-f)^{L/k}$. Finally, we calculate connectivity required for selective broadcast and find that if each node has $O(k \lambda \log N_h)$ neighbors, the success probability is at least $\geq 1-  4L \left(\frac{N_h}{8k}\right)^{1-\lambda} - 4L \exp \left( - \frac{N_h}{8(4k-1)} \right)$. Putting these results together via the union bound, we analyze correctness and show that with high probability, all light nodes accept the block if and only if all of its data is available for download, and every included transaction is valid.

\begin{thm}
    CoVer satisfies the following:
    \begin{enumerate}
        \item If the block is valid and available, then all honest participants will accept the block with probability $\geq 1 - e^{-\lambda} -  4L \left(\frac{N_h}{8k}\right)^{1-\lambda} - 4L \exp \left( - \frac{N_h}{8(4k-1)} \right)$.
        \item If the block is unavailable, then all honest participants will reject  with probability $\geq 1 - N_h (1-f)^{L/k}$.
        \item If the block is invalid but available, then all honest participants will reject the block with probability $\geq 1 - e^{-\lambda} -  4L \left(\frac{N_h}{8k}\right)^{1-\lambda} - 4L \exp \left( - \frac{N_h}{8(4k-1)} \right)$.
    \end{enumerate}
\end{thm}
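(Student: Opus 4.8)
The plan is to prove the theorem by a single union bound that assembles the three quantitative estimates already derived in this section into one statement about the decision made at \emph{every} honest node. Concretely, I would isolate three bad events: $E_{\mathrm{cov}}$, that some validation section among the $k$ sections (or, for availability, some symbol of the coded Merkle tree) is chosen by no honest node; $E_{\mathrm{bc}}$, that selective broadcast fails, i.e.\ some honest node does not receive a symbol it wishes to download; and $E_{\mathrm{samp}}$, that the block is unavailable yet some honest node fails to sample a hidden symbol. The coverage calculation bounds $\Pr[E_{\mathrm{cov}}] \le e^{-\lambda}$ (using $N_h \ge k(\log k + \lambda)$ together with the analogous $N_h \ge \tfrac{L}{c}(\log L + \lambda)$ for the tree), the connectivity analysis bounds $\Pr[E_{\mathrm{bc}}] \le 4L(N_h/8k)^{1-\lambda} + 4L\exp(-N_h/8(4k-1))$, and the sampling argument bounds $\Pr[E_{\mathrm{samp}}] \le N_h(1-f)^{L/k}$. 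The whole proof then reduces to showing that on the complements of the relevant events the protocol produces exactly the claimed accept/reject decision at each honest node.

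For part (1) I would condition on $\overline{E_{\mathrm{cov}}} \cap \overline{E_{\mathrm{bc}}}$: every honest node obtains its desired and needed symbols, and since the tree is fully covered, collaborative decoding terminates with every layer decoded; as the block is correctly coded no hash mismatch arises, so no coding fraud proof is generated, while validity of every transaction makes \texttt{is\_valid\_txn} return \texttt{True} throughout, so no transaction fraud proof is broadcast either. The only remaining worry is a \emph{spurious} fraud proof from a malicious node, which I would rule out by the unforgeability built into \texttt{is\_valid\_fraudProof} and \texttt{is\_valid\_codingFraudProof}: a valid fraud proof requires either a forged Merkle proof or a genuine hash collision, both excluded by the collision-resistance assumption. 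Hence no honest node sees a valid fraud proof and all accept. Part (3) is the mirror image on the same good events: the section containing an invalid transaction is covered by some honest node, which downloads the offending transaction via selective broadcast, runs \texttt{is\_valid\_txn}, and broadcasts a fraud proof; since conventional gossip forwards every valid message and the honest subgraph is connected (the same guarantee underlying $\overline{E_{\mathrm{bc}}}$), the proof reaches all honest nodes within the fixed delay and all reject. Part (2) is simplest: on $\overline{E_{\mathrm{samp}}}$ every honest node samples a hidden symbol of the stopping set, cannot obtain it from any source, and rejects for unavailability---this needs neither coverage nor broadcast, which is why only the $N_h(1-f)^{L/k}$ term appears.

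The main obstacle I anticipate is not the arithmetic of the union bound but the adversarial bookkeeping in part (3), and the matching soundness argument in part (1): I must verify that malicious participants can neither suppress a legitimate fraud proof nor fabricate a false one. Suppression is handled by arguing that deleting the adversary's own edges cannot disconnect the honest-only gossip subgraph---precisely the reason the connectivity bound carries the $\tfrac{1}{1-\alpha}$ inflation of the neighbor count---while fabrication is handled by the Merkle-proof checks inside the verification routines under the no-hash-inversion assumption. A secondary subtlety is that $\overline{E_{\mathrm{cov}}}$ must simultaneously guarantee coverage of validation sections and of tree symbols; since $c = L/k$ makes both thresholds of the form $N_h \gtrsim k(\log(\cdot)+\lambda)$, I would fold them into a single $e^{-\lambda}$ term by a final union bound, taking $\lambda$ (up to the $O(\cdot)$ slack already flagged in the text) large enough to absorb the $\log L$ versus $\log k$ gap.
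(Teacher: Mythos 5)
Your proposal follows essentially the same route as the paper: the same three bad events (section/tree coverage, selective-broadcast connectivity, and sampling of a hidden symbol), the same quantitative bounds imported from the coverage, connectivity, and collaborative-decoding theorems, and the same union bound assembling them case by case, with part (2) correctly isolated as needing only the sampling bound. The only differences are cosmetic --- you spell out the unforgeability of fraud proofs and the need to fold the validation-coverage and tree-coverage conditions into a single $e^{-\lambda}$ term, both of which the paper leaves implicit, while the paper in turn makes the coding-fraud sub-case of part (3) explicit where you only sketch the invalid-transaction sub-case.
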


\section{Related Work}
\label{section:related-work}
To the best of our knowledge, CoVer is the first light-node-only protocol for block validation. First, we briefly survey works related to light node protocols. Light nodes, or simplified payment verification (SPV) clients, were proposed by Nakamoto in the original Bitcoin paper~\cite{bitcoin}. These nodes only store block headers and perform no validation, relying on full nodes for transaction membership proofs. As a result, they are vulnerable to several privacy and security attacks~\cite{Karame:16}, and several works are aimed at improving their privacy and security; see, e.g.,~\cite{Bloom:SPV:12,Client-Bloom:SPV:17,Bite:SPV:19}. Rather than improving security, another line of work instead reduces the light node computational burden even further by enabling them to download only a sublinear (in the length of the blockchain) number of block headers~\cite{Kiayias:16:liteSPV,Kiayias:20:liteSPV,Flyclient:20:liteSPV}.

Several works focus on reducing the costs associated with running a full node, especially storage costs~\cite{Raman:storage:17,Perard:storage:18,storage} and communication cost during bootstrap~\cite{vault}. For sharded blockchains, Polyshard~\cite{Polyshard:20} proposes a protocol for verification functions that can be represented as polynomials.

Our work directly builds on top of recent work on fraud proofs and data availability. In \cite{fraud}, Al-Bassam et al.\ propose the ideas of fraud proofs and data availability, allowing light nodes to receive compact proofs of block invalidity from full nodes. While Al-Bassam et al.\ use 2D Reed Solomon codes for data availability, Yu et al.~\cite{codedmerkle} propose a new coding scheme called the coded Merkle tree to further reduce the coding fraud proof sizes and decoding complexity. While these works require light nodes to rely on full nodes, our main contribution is to remove this requirement through light-node-only protocols.

\section{Conclusion}
\label{section:conclusion}
We propose CoVer, a decentralized and permissionless protocol that enables light nodes to collaboratively validate blocks even under a dishonest majority, achieving the same level of security as full nodes with a fraction of the work. CoVer allows nodes of all sizes to contribute to network security, bringing us closer to fully scalable blockchains and enabling a new world of possibilities.

\bibliographystyle{IEEEtran}
\bibliography{Bib_CoVer}

\section{Appendix}

\subsection{LDPC Code Example} \label{section:ldpc-example}

Consider a toy example for an LDPC code represented using a graph shown in Fig.~\ref{fig:ldpc-toy}. The left nodes (called variable nodes) correspond to codeword symbols, and the right nodes (called check nodes) correspond to parity-check equations. Suppose $X_0, X_2$, and $X_4$ are erased. Let us see how they can be recovered using the peeling decoder. It is an iterative decoder. In each iteration, the decoder first finds a check node that has only one neighbor erased. Then, it recovers the value of the erased neighbor. For instance, in the first iteration, the decoder finds that the first check node $X_0+X_2+X_4$ has only $X_4$ erased. It computes $X_4 = -(X_0+X_2)$. We illustrate the process in Fig.~\ref{fig:ldpc-peeling}.

\begin{figure}[!t]
    \centering
    \includegraphics[width=0.5\linewidth]{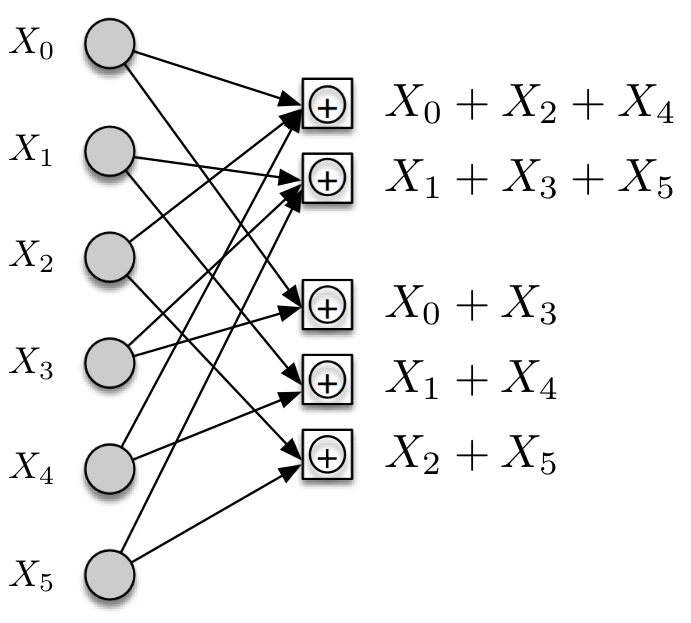}
    \caption{A toy example of a graph associated with an LDPC code.}
    \label{fig:ldpc-toy}
\end{figure}

\begin{figure}[!t]
    \centering
    \includegraphics[width=\linewidth]{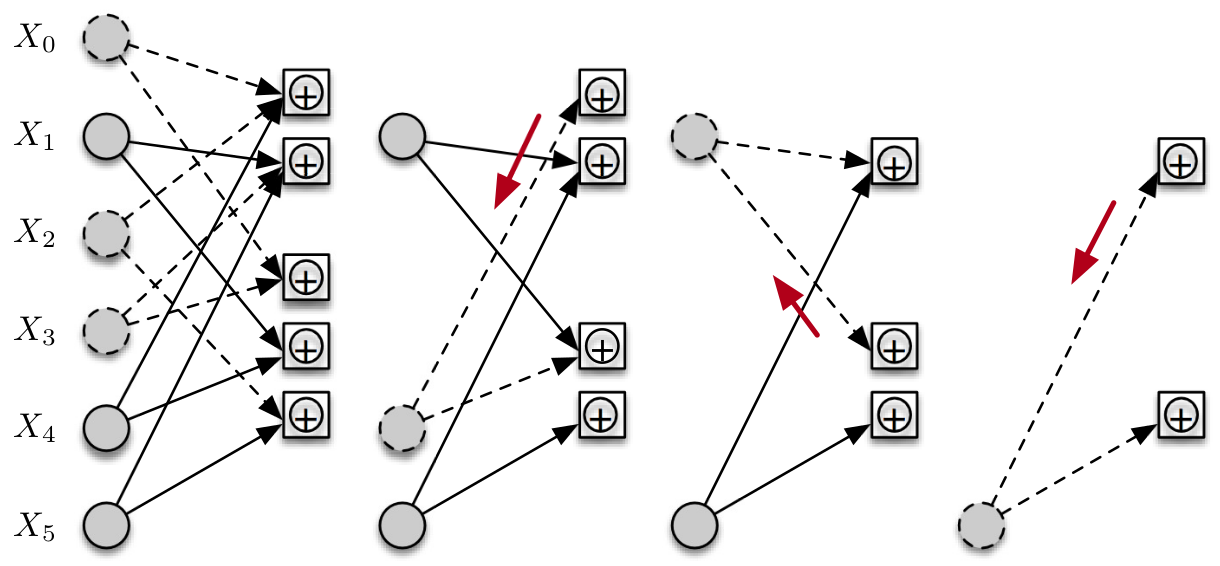}
    \caption{A toy example of the peeling decoder.}
    \label{fig:ldpc-peeling}
\end{figure}

\subsection{Collaborative Coding Fraud Proof Example}\label{section:codingfraudexmp}

Next, we provide an example of a coding fraud proof produced during collaborative decoding. Decoding proceeds layer by layer, starting from the top. Suppose we have the following tree, where a light node is trying to decode the third layer (layers below not shown):
\\\\
\centerline{\includegraphics[width=\linewidth]{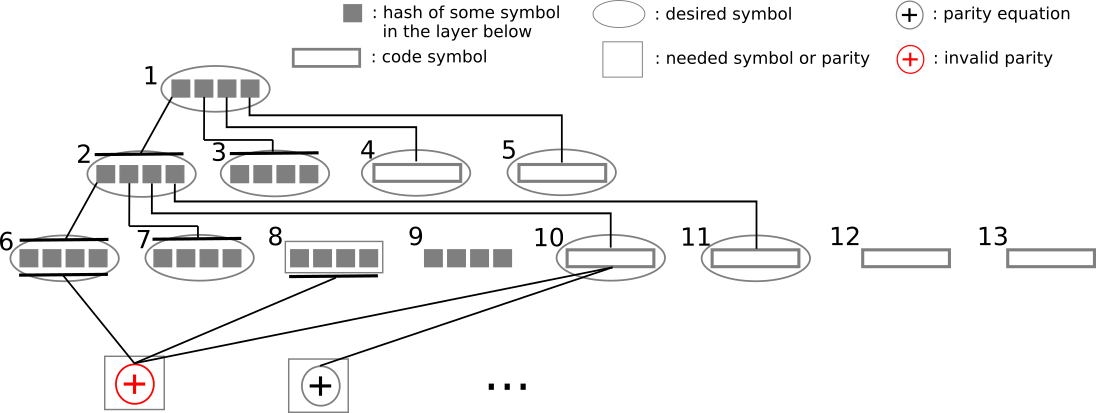}}
\\\\

Currently, the node is attempting to decode using the parities (two of which are shown), but the one involving symbols 6, 8, and 10 is invalid because the miner constructed the code incorrectly. In other words, the parity does not sum to zero, but the node is unaware of this error. 

Note that because we are on layer 3, the node has stored symbols 1, 2, 3 (the coded symbols can be discarded after the layer is decoded). If this layer is decoded successfully, then it will add symbols 6 and 7 to its storage. Decoding will proceed as follows:
\begin{enumerate}
    \item First, the node will peel symbol 10, check that its hash matches the third hash in symbol 2, and then broadcast it with a Merkle proof (which consists of symbols 10, 2, and 1). 
    \item Next, it will wait for symbols to be broadcasted with a Merkle proof. Suppose symbol 8 is broadcasted and the Merkle proof is valid.
    \item Then, the node will attempt to use the red parity to decode symbol 6. It will find that the resulting hash does not match the first hash in symbol 2, so the red parity is invalid.
    \item The node will produce and broadcast a coding fraud proof containing the following:
    \begin{itemize}
        \item $\hat{s}_6$: the newly decoded symbol 6
        \item $h_6$: the hash of symbol 6, which is the first quarter of symbol 2.
        \item $\texttt{hashProof}$: the proof of $h_6$, consisting of symbols $2$ and $1$.
        \item $s_8, s_{10}$: the other symbols in the parity, or symbols 8 and 10.
        \item $\texttt{symbolProofs}$: the Merkle proofs, or symbols 8, 3, and 1 and symbols 10, 2, and 1.
    \end{itemize}
\end{enumerate}
Other nodes will check the fraud proof, and find that all of its Merkle proofs are correct and $s_8 + s_{10} + \hat{s}_6 = 0$, but $\texttt{hash}(\hat{s}_6) \neq h_6$, so they will reject the block.

\subsection{Transaction Fraud Proof Example}\label{section:valfraudexmp}
In this section, we provide a simple example of an invalid transaction and the resulting fraud proof. Suppose we have the following sequence of transactions:
\\\\
\begin{tabular}{l|l}
    \multicolumn{2}{l}{Transaction 5 in block 8} \\
    \midrule
    \texttt{txid}  & 8:5 \\
    \texttt{sender} & 10 \\
    \texttt{signature} & \texttt{sign}(10) \\
    \texttt{outputs} & 1.\ \$10 to account 8 \\
    & 2.\ \$8 to account 3\\
    \texttt{inputs} & \$18 from output 3 in \texttt{txn}$_{5:2}$ \\
    \texttt{inputProofs} & $\texttt{proof}(\texttt{txn}_{5:2} \to \texttt{header}(5).\texttt{root})$ \\
    \bottomrule
\end{tabular}
\\\\
$\;$
\\\\
\begin{tabular}{l|l}
    \multicolumn{2}{l}{Transaction 3 in block 9} \\
    \midrule
    \texttt{txid}  & 9:3 \\
    \texttt{sender} & 8 \\
    \texttt{signature} & \texttt{sign}(8) \\
    \texttt{outputs} & 1.\ \$5 to account 4 \\
    & 2.\ \$5 to account 8\\
    \texttt{inputs} & \$10 from output 1 in \texttt{txn}$_{8:5}$ \\
    \texttt{inputProofs} & $\texttt{proof}(\texttt{txn}_{8:5} \to \texttt{header}(8).\texttt{root})$ \\
    \bottomrule
\end{tabular}
\\\\
$\;$
\\\\
\begin{tabular}{l|l}
    \multicolumn{2}{l}{Transaction 2 in block 10} \\
    \midrule
    \texttt{txid}  & 10:2 \\
    \texttt{sender} & 8 \\
    \texttt{signature} & \texttt{sign}(8) \\
    \texttt{outputs} & 1.\ \$10 to account 3 \\
    \texttt{inputs} & \$10 from output 1 in \texttt{txn}$_{8:5}$ \\
    \texttt{inputProofs} & $\texttt{proof}(\texttt{txn}_{8:5} \to \texttt{header}(8).\texttt{root})$ \\
    \bottomrule
\end{tabular}
\\\\
$\;$
\\\\
In this case, block 10 in invalid because sender 8 is trying to double-spend output 1 from transaction 8:5. Then, when block 10 is broadcast, the light node validating account 8 will catch the double spending attempt because it will have transaction 9:3 in its stored history. Then, it will produce and broadcast the following fraud proof. Any recipient of this fraud proof will check that the two transactions' inputs collide and therefore reject block 10.
\\\\
\begin{tabular}{l}
     Fraud proof for transaction 2 in block 10 \\
     \midrule
     \resizebox{\linewidth}{!}{\begin{tabular}{l|l}
    \texttt{invalidTxn}  & \begin{tabular}{l|l}
    \texttt{txid}  & 10:2 \\
    \texttt{sender} & 8 \\
    \texttt{signature} & \texttt{sign}(8) \\
    \texttt{outputs} & 1.\ \$10 to account 3 \\
    \texttt{inputs} & \$10 from output 1 in \texttt{txn}$_{8:5}$ \\
    \texttt{inputProofs} & $\texttt{proof}(\texttt{txn}_{8:5} \to \texttt{header}(8).\texttt{root})$ \\
\end{tabular} \\
\midrule
    \texttt{invalidTxnProof} & $\texttt{proof}(\texttt{txn}_{10:2} \to \texttt{header}(10).\texttt{root})$ \\
\midrule
    \texttt{pastTxn} & \begin{tabular}{l|l}
    \texttt{txid}  & 9:3 \\
    \texttt{sender} & 8 \\
    \texttt{signature} & \texttt{sign}(8) \\
    \texttt{outputs} & 1.\ \$5 to account 4 \\
    & 2.\ \$5 to account 8\\
    \texttt{inputs} & \$10 from output 1 in \texttt{txn}$_{8:5}$ \\
    \texttt{inputProofs} & $\texttt{proof}(\texttt{txn}_{8:5} \to \texttt{header}(8).\texttt{root})$ \\
\end{tabular} \\
\midrule
    \texttt{pastTxnProof} & $\texttt{proof}(\texttt{txn}_{8:5} \to \texttt{header}(8).\texttt{root})$   \\
    \bottomrule
\end{tabular}}
\end{tabular}

\subsection{Collaborative Validation Coverage Analysis}
Suppose that a block is divided into $k$ sections, with each honest node validating a single section. We wish to analyze the probability that together, the honest nodes cover the block.
\begin{thm}\label{thm:coverage}
	Suppose there are $k$ sections and each of the $N_h$ participants chooses a section at random. Then, if $N_h \geq k (\log k + \lambda)$, all of the sections are covered with probability at least $1-e^{-\lambda}$.
\end{thm}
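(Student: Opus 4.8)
The plan is to recognize this as a standard coupon-collector (balls-into-bins) coverage bound and dispatch it with a union bound over the $k$ sections. First I would fix a section $i \in \{1,\dots,k\}$ and define the bad event $A_i$ that \emph{no} participant chooses section $i$. Because the $N_h$ participants select their sections independently and uniformly, each participant avoids section $i$ with probability $1 - 1/k$, so $P(A_i) = (1 - 1/k)^{N_h}$.

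The block fails to be fully covered exactly when $\bigcup_{i=1}^k A_i$ occurs, so the union bound gives $P(\text{some section uncovered}) \leq \sum_{i=1}^k P(A_i) = k\,(1 - 1/k)^{N_h}$. I would then apply the elementary inequality $1 - x \leq e^{-x}$ with $x = 1/k$ to get $(1 - 1/k)^{N_h} \leq e^{-N_h/k}$, yielding the clean bound $k\,e^{-N_h/k}$ on the failure probability.

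Finally I would substitute the hypothesis $N_h \geq k(\log k + \lambda)$, which gives $N_h/k \geq \log k + \lambda$ and hence $k\,e^{-N_h/k} \leq k\,e^{-(\log k + \lambda)} = e^{-\lambda}$. Taking the complement shows that all sections are covered with probability at least $1 - e^{-\lambda}$, exactly as claimed, and it also explains precisely why the threshold is $k(\log k + \lambda)$: the $\log k$ term is what is needed to absorb the factor of $k$ from the union bound.

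There is no genuine obstacle here; the only points requiring minor care are confirming that the per-participant choices are independent, so that $P(A_i)$ factorizes cleanly, and noting that $\log$ denotes the natural logarithm, which is what makes $e^{-\log k} = 1/k$ cancel the union-bound factor exactly. A sharper non-asymptotic analysis via a second-moment or Poissonization argument is possible but unnecessary, since the simple union bound already meets the stated threshold.
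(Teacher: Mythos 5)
Your proof is correct and follows exactly the same route as the paper's: a union bound over the $k$ sections, the inequality $(1-1/k)^{N_h} \leq e^{-N_h/k}$, and substitution of $N_h \geq k(\log k + \lambda)$ to cancel the factor of $k$. Your version is slightly more explicit about independence and the base of the logarithm, but the argument is identical.
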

\begin{proof}
	By the union bound,
	\begin{align*}
	&\Pr(\text{all sections covered}) \\
	&\geq 1- k \left( 1 - \frac{1}{k} \right)^{N_h} \\
	&\geq 1 - ke^{-\frac{1}{k}N_h}.
	\end{align*}
	Choosing $N_h \geq k (\log k + \lambda)$ results in the bound $1 - e^{-\lambda}$.
\end{proof}

\subsection{Selective Broadcast Connectivity Analysis}
We would like to analyze the probability that the selective broadcast procedure succeeds, meaning that every node receives the data it wishes to download. We will represent the connectivity between nodes as a graph.

Note that if a block has $L$ transactions, then the coded Merkle tree contains $2L + L + L/2 + ... + 1 \approx 4L$ symbols. Let $r = \frac{(L/k)\log L}{4L}$ denote the fraction of total symbols that each node downloads, and let $M$ denote the total number of symbols. 

Then, the procedure succeeds if for each of the $M$ symbols, the subgraph containing nodes interested in that symbol is connected. The following theorem analyzes the connectivity required for selective broadcast to succeed under the simplifying assumption that the connectivity graph is an Erd\H{o}s-R{\'e}nyi random graph \cite{graph}.
\begin{thm}\label{thm:connectivity}
	Suppose we have a graph $G(N_h, p)$ of nodes, where each node is connected independently with probability $p$. Also, suppose each node is colored with $rM$ of $M$ colors, chosen uniformly at random. Then, for $p \geq \frac{2 \lambda \log (rN_h/2)}{rN_h}$, all of the color subgraphs are connected with probability at least $1 - M (rN_h/2)^{1-\lambda} - M\exp \left( - \frac{r N_h}{8(1-r)} \right)$.
\end{thm}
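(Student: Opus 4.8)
The plan is to fix one color, reduce the claim to the connectivity of an ordinary Erd\H{o}s-R{\'e}nyi graph on a randomly sized vertex set, and then combine the $M$ colors with a union bound. The key structural observation is that the edge variables of $G(N_h,p)$ are independent of the color assignments, and that the colorings of distinct nodes are independent; since each node draws a uniformly random $rM$-subset of the $M$ colors, any fixed color $c$ appears at a given node with marginal probability $rM/M = r$, independently across nodes. Writing $S_c$ for the set of nodes carrying color $c$, this makes $X_c := |S_c| \sim \mathrm{Bin}(N_h, r)$, and, conditioned on $S_c$, the subgraph induced on $S_c$ is distributed exactly as $G(X_c, p)$. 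Thus it suffices to bound the probability that $G(X_c,p)$ is disconnected for a single color and multiply by $M$.

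For one color I would split the failure probability using the event $A_c = \{X_c \ge rN_h/2\}$, writing $\Pr[\text{color }c\text{ disconnected}] \le \Pr[A_c^{c}] + \Pr[\{G(X_c,p)\text{ disconnected}\}\cap A_c]$. The first term is a lower-tail bound on the binomial $X_c$, whose mean is $rN_h$ and variance is $N_h r(1-r)$; a Chernoff/Bernstein estimate at deviation $t = rN_h/2$ gives $\Pr[A_c^{c}] \le \exp\!\left(-\tfrac{(rN_h/2)^2}{2N_h r(1-r)}\right) = \exp\!\left(-\tfrac{rN_h}{8(1-r)}\right)$, which is exactly the second term in the stated bound.

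For the remaining term I would condition on the size $X_c = s \ge rN_h/2$ and invoke the classical cut-based connectivity estimate for $G(s,p)$: a graph is disconnected iff some nonempty $T$ with $|T|\le s/2$ has no edges to its complement, so $\Pr[\text{disconnected}] \le \sum_{t=1}^{\lfloor s/2\rfloor}\binom{s}{t}(1-p)^{t(s-t)}$. Bounding $\binom{s}{t}\le s^t$ and $(1-p)\le e^{-p}$, the chosen $p \ge \tfrac{2\lambda\log(rN_h/2)}{rN_h}$ forces $ps \ge \lambda\log(rN_h/2)$, which makes the $t=1$ term $s(1-p)^{s-1} \approx se^{-ps}$ of order $(rN_h/2)^{1-\lambda}$ and the consecutive-term ratios small, so the sum is dominated by this isolated-vertex term. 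Since the resulting bound $s\mapsto se^{-ps}$ is decreasing for $s\ge 1/p$ (and $1/p \le rN_h/2$ under the hypotheses), its worst admissible value on $A_c$ is attained at $s=rN_h/2$, giving $(rN_h/2)^{1-\lambda}$ uniformly over $A_c$. Averaging over $X_c$ and adding the two parts yields the single-color bound, and a union bound over the $M$ colors gives the theorem.

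The step I expect to demand the most care is the connectivity estimate. The crude uniform bound $s-t\ge s/2$ applied to every term only yields the weaker exponent $(rN_h/2)^{1-\lambda/2}$; to land the sharp $(rN_h/2)^{1-\lambda}$ one must treat the $t=1$ term with the exact factor $t(s-t)=s-1$ and then control $t\ge 2$ via a ratio argument showing the series is geometrically dominated by its first term. A secondary point to handle cleanly is that the $M$ color-subgraph events are dependent (shared edges and correlated colorings), but since only the union bound is used this dependence is harmless.
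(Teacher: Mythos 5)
Your proposal is correct and follows essentially the same route as the paper: a binomial lower-tail bound on each color class's size (yielding the $\exp\left(-\frac{rN_h}{8(1-r)}\right)$ term), a conditional connectivity bound for $G(n,p)$ with $n = rN_h/2$ giving $n^{1-\lambda}$, and a union bound over the $M$ colors. The only difference is that you expand the ``standard $G(n,p)$ arguments'' the paper invokes by writing out the cut-counting estimate explicitly, which is a welcome but not structurally different addition.
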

\begin{proof}
	Note that each subgraph is also Erd\H{o}s-R{\'e}nyi with connection probability $p$. Then, the proof will proceed as follows: first, we will produce a high-probability lower bound on the number of nodes per subgraph. Letting $n$ denote this lower bound, we will then set $p \geq \frac{\lambda \log n}{n}$, which by standard $G(n,p)$ arguments means that $\Pr(\text{subgraph disconnected}) \leq n^{1-\lambda}$. Finally, we will use the union bound over subgraphs.
	
	Note that the number of nodes per subgraph is $\text{Binom}(N_h, r)$, but the subgraphs are not independent of each other. Then, to produce the lower bound on nodes per subgraph, we will use the union bound combined with the a sub-Gaussian bound for the Binomial:
	\begin{align*}
	&\Pr(\text{for all subgraphs, \# nodes} \geq rN_h/2) \\ 
	&\geq 1 - M \Pr( \text{Binom}(N_h, r) \leq rN_h/2) \\
	&\geq 1 - M \exp \left( - \frac{r^2 N_h^2/4}{2 N_h r (1-r)} \right) \\
	&= 1 - M \exp \left( - \frac{r N_h}{8(1-r)} \right)
	\end{align*}
	Then, setting 
	\begin{equation*}
	p \geq \frac{2 \lambda \log (rN_h/2)}{rN_h},
	\end{equation*}
	the probability that all subgraphs are connected conditioned on the high-probability lower bound is
	\begin{align*}
	&\Pr(\text{all subgraphs connected} \mid \text{\# nodes} \geq rN_h/2 ) \\
	&\geq 1 - M \Pr(\text{subgraph $i$ disconnected} \mid \text{\# nodes} \geq rN_h/2) \\
	&\geq 1 - M (rN_h/2)^{1-\lambda},
	\end{align*}
	so the overall probability by the union bound is
	\begin{align*}
	&\Pr(\text{all subgraphs connected}) \\
	&\geq 1 - M (rN_h/2)^{1-\lambda} - M\exp \left( - \frac{r N_h}{8(1-r)} \right).
	\end{align*}
\end{proof}
While the theorem considered the honest subgraph, the full graph is $G(N,p)$, where $N = N_h/(1-\alpha)$ denotes the total number of nodes with $\alpha$ denoting the malicious fraction. Then, substituting $r = \frac{(L/k)\log L}{4L}$, each node should have 
\begin{align*}
pN &= \frac{2 \lambda \log (rN_h/2)}{rN_h} \frac{N_h}{1-\alpha} \\
&= 8 \lambda k \cdot \frac{1}{1-\alpha}\frac{ \log\left( \frac{N_h \log L}{8k} \right)}{\log L} \\
&= O\left( \left(\frac{1}{1-\alpha} \right)k \log N_h \right)
\end{align*}
neighbors.

\subsection{Collaborative Decoding Analysis}
In this section, we analyze the correctness of the collaborative decoding protocol described in Section~\ref{section:collabdecode}.
\begin{thm}\label{thm:collabdecodeanalysis}
    Suppose that each selective broadcast reaches the entire network. Also, suppose that each participant samples $c \log L$ symbols to decode, following the sampling procedure in Section~\ref{section:avail}, and $N_h \geq (L/c)(\log L + \lambda)$. Then, the collaborative decoding procedure in Section~\ref{section:collabdecode} satisfies the following:
    \begin{enumerate}
        \item If the code is constructed correctly and the data is available, i.e.\ no stopping set is hidden, then the nodes decode all of the data with probability $\geq 1- e^{-\lambda}$.
        \item If the data is unavailable, then all honest nodes reject the block with probability $\geq 1 - N_h (1-f)^c$, where $f$ is the size of the smallest stopping set.
        \item If the data is available but the code is constructed incorrectly, then all nodes receive a valid coding fraud proof with probability $\geq 1- e^{-\lambda}$.
    \end{enumerate}
\end{thm}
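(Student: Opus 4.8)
The plan is to reduce all three cases to a single combinatorial \emph{coverage} event and then reason separately about what the asynchronous collaborative peeling decoder does once coverage holds. Let $\mathcal{C}$ be the event that every one of the $\approx 4L$ symbols of the coded Merkle tree lies in the $\texttt{desiredSymbols}$ set of at least one honest node. Since each honest subtree selects essentially $c$ symbols per layer uniformly at random, I would first bound $\Pr(\mathcal{C})$ by a union bound over symbols in the same style as the proof of Theorem~\ref{thm:coverage}: a fixed symbol in a layer of size $m$ is missed by all $N_h$ nodes with probability at most $(1-c/m)^{N_h} \le e^{-cN_h/m}$, and summing over all symbols (the bottom layer of size $2L$ dominates) shows that $N_h \ge (L/c)(\log L + \lambda)$ forces $\Pr(\mathcal{C}) \ge 1 - e^{-\lambda}$ for an appropriate constant.

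The second and most important ingredient is an \emph{equivalence lemma}: conditioned on $\mathcal{C}$ and on every broadcast reaching the whole network, the collaborative decoder recovers, layer by layer, exactly the symbols that the ordinary peeling decoder of Section~\ref{section:codeoverview} would recover from the revealed symbols. I would prove this by induction along the global peeling order within each layer, processing layers top-down so that the committed hashes at layer $\ell$ are available once layer $\ell-1$ is decoded. In the inductive step the global decoder resolves a symbol $s$ through a degree-one parity $P$ whose remaining entries are already known; by $\mathcal{C}$ some honest node $v$ has $s$ as a desired symbol, which makes $P$ a needed parity and the other entries of $P$ needed symbols for $v$; those entries were broadcast by the inductive hypothesis, so $v$ resolves $s$, checks its hash, and rebroadcasts it. This simultaneously rules out deadlock and shows every globally recoverable symbol is recovered collaboratively.

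Given these two lemmas the three cases follow quickly. For (1), if the code is correct and no stopping set is hidden, the peeling decoder succeeds by the LDPC stopping-set property, so on $\mathcal{C}$ the equivalence lemma recovers the whole tree, giving probability $\ge 1 - e^{-\lambda}$. For (2) I would not use the equivalence lemma: unavailability means the hidden set contains a stopping set of size $\ge fm$ in some layer, and each such symbol is simultaneously undownloadable and undecodable (no honest node can produce it, and no malicious node can broadcast it with a valid Merkle proof short of inverting the hash). A single node misses all such symbols in that layer with probability at most $(1-f)^c$, so a union bound over the $N_h$ honest nodes shows every honest node samples an unrecoverable desired symbol, and therefore rejects after its fixed delay, with probability $\ge 1 - N_h(1-f)^c$. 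For (3), let $\ell^\star$ be the first layer containing a parity that does not sum to zero; on $\mathcal{C}$ the layers above $\ell^\star$ decode correctly (so the committed hashes at $\ell^\star$ are correct), and some honest node covering a symbol incident to the bad parity will, during \texttt{collab\_decode\_layer}, either hit a degree-zero violation or decode a symbol whose hash disagrees with its commitment (a collision being excluded by the hash hardness assumption). That node emits a valid \texttt{codingFraudProof}, which reaches all nodes by the network assumption; this occurs whenever $\mathcal{C}$ holds, with probability $\ge 1 - e^{-\lambda}$.

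The main obstacle is the equivalence lemma: because the decoder is asynchronous and each node only inspects parities incident to its own desired symbols, I must preclude a global deadlock in which every stuck node waits on a symbol only another stuck node could produce. The induction along the global peeling order is exactly what closes this gap, but it genuinely relies on two structural facts that must be made explicit, namely that $\texttt{neededSymbols}$ contains every non-pivot entry of every parity incident to a desired symbol, and that broadcasts propagate across the entire network. A secondary subtlety is that the per-layer samples drawn by a single subtree are correlated through the path-to-root structure; I would sidestep this by bounding each layer separately and union-bounding over layers, so that independence across layers is never invoked.
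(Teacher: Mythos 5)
Your proposal is correct and follows essentially the same route as the paper: a coverage event bounded by a union bound in the style of Theorem~\ref{thm:coverage}, reduction of the collaborative decoder to ordinary peeling under coverage plus full broadcast propagation (your ``equivalence lemma'' is a more explicit induction formalizing the paper's one-line claim that decoding ``proceeds exactly as in the non-collaborative case''), the same per-node sampling union bound for case (2), and the same first-invalid-parity / degree-zero-check dichotomy for case (3). The extra care you take with the deadlock issue and the correlated per-layer samples is a welcome tightening but does not change the argument.
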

\begin{proof}
    Recall that in the collaborative decoding protocol, each participant chooses a subtree to decode. To decode this subtree, it keeps track of any parity equations connected to any symbols in the subtree, downloading any broadcasted symbols in these parity equations. Each node broadcasts any symbols it decoded, along with a Merkle proof.
    
    First, let's prove case (1). Suppose that each symbol in the tree is covered by at least one honest participant. Also, suppose that each selective broadcast reaches the entire network. Note that because broadcasted symbols must be accompanied by a Merkle proof, malicious nodes cannot broadcast fake symbols. Next, if there is a degree one parity equation at any point in the protocol, then by the coverage condition, there will exist at least one honest node in charge of decoding the last unknown symbol. By the protocol, this node will have downloaded each of the other symbols in the parity equation, so it will decode the final symbol and broadcast it. Therefore, decoding proceeds exactly as in the non-collaborative case, so it succeeds as long as no stopping set is hidden. By Theorem~\ref{thm:coverage}, the coverage condition holds with probability $\geq 1 - e^{-\lambda}$.
    
    To prove case (2), suppose that a stopping set is hidden in at least one of the layers of the coded Merkle tree. Each node samples $c$ symbols at random from each layer. Then, each node samples an unavailable share with probability $\geq 1 - (1-f)^c$. Therefore, by the union bound, every node samples an unavailable share with probability $\geq 1 - N_h (1-f)^c$.
    
    Finally, let's prove case (3). Suppose that the code is constructed incorrectly, meaning that there exists at least one parity equation that does not sum to zero. By availability, we can assume that eventually the entire data will be decoded. Let's assume both the coverage and selective broadcast conditions. We split into two cases:
    \begin{enumerate}
        \item Suppose that during decoding, all parity equations used to decode are correct, and invalid parities are not used. Then, decoding will succeed and all of the data will be revealed. After decoding, in step $2$ of $\texttt{collab\_decode\_layer}$ (Section~\ref{section:collabdecode}), each node checks its parity equations to ensure that they sum to zero. By coverage, at least one node will check the invalid parity equation and broadcast a coding fraud proof.
        \item Otherwise, suppose that in the process of decoding, at least one invalid parity equation is used. Let's consider the time at which the first invalid parity equation is decoded. Because this parity equation is the first invalid one, all symbols decoded until now are correct, meaning that they match the hash commitments. When the invalid parity equation is decoded, it must be degree one. In particular, if the equation sums to $K \neq 0$, we have the equation
        \begin{equation*}
            0 \neq K = s_d + \sum_{i=1}^{d-1} s_i,
        \end{equation*}
        where $s_d$ is the unknown symbol. When an honest node decodes the unknown symbol, the resulting symbol is
        \begin{equation*}
            \hat{s}_d = -\sum_{i=1}^{d-1} s_i \neq K - \sum_{i=1}^{d-1} s_i = s_d.
        \end{equation*}
        Next, following the protocol, the node will check the hash of $\hat{s}_d$. The true hash $\texttt{hash}(s_d)$ was decoded in the previous layer, and it is correct because the current parity equation is the first invalid one. Therefore, the hashes will not match, or $\texttt{hash}(\hat{s}_d) \neq \texttt{hash}(s_d)$, so the node will broadcast a coding fraud proof.
    \end{enumerate}
\end{proof}

\subsection{Security Analysis}
In this section, we show the correctness of the protocol as a whole. Let $N_h$ denote the number of honest nodes, $L$ denote the number of transactions per block, $1/k$ denote the fraction of each block that each participant validates, $\alpha$ denote the fraction of malicious nodes, and $\lambda$ denote the security parameter. 

Dividing the transactions into $k$ sections, let each honest participant validate one section at random following the protocol in Section~\ref{section:collabvalprotocol}. Also, let each honest participant randomly sample $(L/k)\log L$ symbols from the coded Merkle tree to decode, as described in Section~\ref{section:avail}. Then, we assume that the number of honest nodes is at least
\begin{align*}
    N_h\geq k(\log L + \lambda) 
\end{align*}
and the number of neighbors per node is at least
\begin{align*}
    \text{Neighbors per node} \geq 8 \lambda k \cdot \frac{1}{1-\alpha}\frac{ \log\left( \frac{N_h \log L}{8k} \right)}{\log L},
\end{align*}
where these expressions are derived from the conditions in Sections~\ref{section:coverage}, \ref{section:connectivity}, and \ref{section:collabdecodeanalysis}.
\begin{thm}
    Given the above setup, the protocol satisfies the following:
    \begin{enumerate}
        \item If the block is valid and available, then all honest participants will accept the block with probability $\geq 1 - e^{-\lambda} -  4L \left(\frac{N_h}{8k}\right)^{1-\lambda} - 4L \exp \left( - \frac{N_h}{8(4k-1)} \right)$.
        \item If the block is unavailable, then all honest participants will reject the block with probability $\geq 1 - N_h (1-f)^{L/k}$.
        \item If the block is invalid but available, then all honest participants will reject the block with probability $\geq 1 - e^{-\lambda} -  4L \left(\frac{N_h}{8k}\right)^{1-\lambda} - 4L \exp \left( - \frac{N_h}{8(4k-1)} \right)$.
    \end{enumerate}
    
\end{thm}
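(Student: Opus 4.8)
The plan is to prove all three cases with a single union bound over two ``good events'' whose failure probabilities are supplied by the three lemmas already established, after which each case follows essentially deterministically. Define the \emph{coverage event} $\mathcal{E}_{\mathrm{cov}}$ to be the event that every transaction section and every symbol of the coded Merkle tree is validated/decoded by at least one honest node, and the \emph{broadcast event} $\mathcal{E}_{\mathrm{bc}}$ to be the event that selective broadcast succeeds, i.e.\ every honest node receives every symbol it requests. Because the setup fixes $c = L/k$, the two coverage requirements $N_h \geq k(\log k + \lambda)$ for transactions (Theorem~\ref{thm:coverage}) and $N_h \geq (L/c)(\log L + \lambda) = k(\log L + \lambda)$ for the tree (Theorem~\ref{thm:collabdecodeanalysis}) can be met simultaneously by the single assumption $N_h \geq k(\log L + \lambda)$, giving $\Pr(\overline{\mathcal{E}_{\mathrm{cov}}}) \leq e^{-\lambda}$. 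Applying Theorem~\ref{thm:connectivity} with $M \approx 4L$ total symbols and download fraction $r$ satisfying $rN_h/2 = N_h/(8k)$ (i.e.\ $r = 1/(4k)$, whence $rN_h/(8(1-r)) = N_h/(8(4k-1))$) yields $\Pr(\overline{\mathcal{E}_{\mathrm{bc}}}) \leq 4L(N_h/8k)^{1-\lambda} + 4L\exp(-N_h/(8(4k-1)))$, which is exactly the broadcast term appearing in cases~1 and~3.

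Conditioned on $\mathcal{E}_{\mathrm{cov}} \cap \mathcal{E}_{\mathrm{bc}}$ I would argue each case deterministically. The structural fact used throughout is that every broadcast symbol and every fraud proof carries a Merkle proof that honest recipients verify, so adversarial nodes can neither inject fake symbols nor forge fraud proofs; their only effective lever is withholding data, already absorbed into $\mathcal{E}_{\mathrm{bc}}$. For case~1 (valid and available, hence in particular correctly coded, matching the hypotheses of Theorem~\ref{thm:collabdecodeanalysis}(1)): collaborative decoding succeeds and emits no coding fraud proof, every sampled symbol is received under $\mathcal{E}_{\mathrm{bc}}$ so each availability check passes, and since all transactions are valid no honest node emits a transaction fraud proof via \texttt{is\_valid\_txn}; thus no valid fraud proof exists and all honest nodes accept. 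For case~3 (invalid but available): if some transaction is invalid, $\mathcal{E}_{\mathrm{cov}}$ guarantees the honest node owning its account section downloads it under $\mathcal{E}_{\mathrm{bc}}$ and gossips a valid \texttt{fraudProof}; if instead the code is incorrect, Theorem~\ref{thm:collabdecodeanalysis}(3) produces a valid \texttt{codingFraudProof}. In either sub-case conventional gossip delivers the proof to every honest node, which then rejects.

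Case~2 (unavailable) is handled separately and needs neither $\mathcal{E}_{\mathrm{cov}}$ nor $\mathcal{E}_{\mathrm{bc}}$: if a stopping set is hidden, Theorem~\ref{thm:collabdecodeanalysis}(2) shows each honest node samples a hidden symbol with probability $\geq 1 - (1-f)^{L/k}$, and a hidden symbol is genuinely undecodable so no honest broadcast can ever supply it, forcing that node's availability check to fail; the union bound over the $N_h$ nodes gives $1 - N_h(1-f)^{L/k}$. Finally I would assemble cases~1 and~3 by the union bound $\Pr(\overline{\mathcal{E}_{\mathrm{cov}}}) + \Pr(\overline{\mathcal{E}_{\mathrm{bc}}})$, reproducing $e^{-\lambda} + 4L(N_h/8k)^{1-\lambda} + 4L\exp(-N_h/(8(4k-1)))$. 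The main obstacle is not any single calculation but the bookkeeping of \emph{which} randomized ingredient each failure term tracks, together with verifying that the adversary's only useful action is data withholding: I must confirm that Merkle-proof checking closes off both symbol injection and fraud-proof forgery, that a valid available block is never spuriously rejected (no false fraud proof can arise once decoding succeeds), and that the tree-coverage and section-coverage events are bounded jointly by a single $e^{-\lambda}$ rather than double counted.
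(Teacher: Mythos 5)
Your proposal is correct and follows essentially the same route as the paper's proof: condition on the coverage and selective-broadcast events, invoke Theorems~\ref{thm:coverage}, \ref{thm:connectivity}, and \ref{thm:collabdecodeanalysis} for their failure probabilities, argue each case deterministically on the good event, and assemble with a union bound (with case~2 handled separately via Theorem~\ref{thm:collabdecodeanalysis}(2)). Your closing caveat about whether section-coverage and tree-coverage should jointly cost $e^{-\lambda}$ or $2e^{-\lambda}$ is a real subtlety that the paper's own proof also glosses over, so you are, if anything, slightly more careful than the original.
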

\begin{proof}
    Recall that by Theorem~\ref{thm:coverage}, all sections are validated by at least one honest participant with probability $\geq 1 - e^{-\lambda}$. Also, by Theorem~\ref{thm:connectivity}, and all selective broadcasts reach the entire network with probability $\geq 1 - 4L \left(\frac{N_h}{8k}\right)^{1-\lambda} - 4L \exp \left( - \frac{N_h}{8(4k-1)}\right)$.
    
    First, we will prove case (1). The only reasons for a participant to reject the block are (a) it receives a valid fraud proof, (b) it receives a valid coding fraud proof, or (c) it is unable to download or decode all of its sampled symbols. (a) and (b) cannot occur because the block is valid, so there do not exist valid fraud or coding fraud proofs. To show that (c) cannot occur, let's assume that each selective broadcast reaches the entire network. Then, by case (1) of Theorem~\ref{thm:collabdecodeanalysis}, the collaborative decoding procedure succeeds and all participants receive their sampled symbols with probability $\geq 1- e^{-\lambda}$. So by the union bound, the selective broadcast condition holds and collaborative decoding succeeds with probability $\geq 1 - e^{-\lambda} -  4L \left(\frac{N_h}{8k}\right)^{1-\lambda} - 4L \exp \left( - \frac{N_h}{8(4k-1)} \right)$.
    
    Next, case (2) follows directly from case (2) of Theorem~\ref{thm:collabdecodeanalysis}.
    
    Finally, we will prove case (3). Let's assume the selective broadcast condition. We will consider two cases.
    
    First, suppose that the block contains an invalid transaction but not coding fraud. Then, by case (1) of Theorem~\ref{section:collabdecodeanalysis}, the data is decoded successfully and each node receives its desired data. Then, assuming that all sections are validated by at least one honest participant, at least one participant will validate the invalid transaction and broadcast a fraud proof. Then, given that the network is connected, all participants receive the fraud proof, validate it, and reject the block. By the union bound, the selective broadcast and coverage conditions hold with probability $\geq 1 - e^{-\lambda} -  4L \left(\frac{N_h}{8k}\right)^{1-\lambda} - 4L \exp \left( - \frac{N_h}{8(4k-1)} \right)$.
    
    Otherwise, suppose that the block contains coding fraud. Then, by case (3) of Theorem~\ref{section:collabdecodeanalysis}, all nodes receive a valid coding fraud proof with probability $\geq 1 - e^{-\lambda}$, which is the probability that each symbol in the tree is covered by at least one honest participant. Then, the nodes will validate the coding fraud proof and reject the block. By the union bound, the selective broadcast and coverage conditions hold with probability $\geq 1 - e^{-\lambda} -  4L \left(\frac{N_h}{8k}\right)^{1-\lambda} - 4L \exp \left( - \frac{N_h}{8(4k-1)} \right)$.
\end{proof}

\end{document}